\newcommand{\keywords}[1]{\par\addvspace\baselineskip
\noindent\keywordname\enspace\ignorespaces#1}
\newcommand{\fq}{{\mathbb F}_{q}}
\newcommand{\mc}{\mathcal{C}}
\begin{document}
\mainmatter
\title
{A class of twisted generalized Reed-Solomon codes\thanks{The research of Jun Zhang was supported  by the National Natural Science Foundation of China under the Grant 11971321 and the National Key Research and Development Program of China under Grants 2018YFA0704703. The research of Zhengchun Zhou was supported by the National Natural Science
		Foundation of China under Grant 62071397 The research of Chunming Tang was supported  by the National Natural Science Foundation of China under the Grant 11871058.}}

 \author{Jun Zhang\inst{1} %
 	 \and Zhengchun Zhou\inst{2}%
 	 \and Chunming Tang\inst{3}
 }
 \institute{
 School of Mathematical Sciences, Capital Normal University\\
 Beijing 100048, China\\
 Email: {\tt junz@cnu.edu.cn}
  \and
School of Mathematics, Southwest Jiaotong University\\
 Chengdu, 610031, China \\
 Email: {\tt zzc@swjtu.edu.cn}
 \and 
 The School of Mathematics and Information, China West Normal University,\\
  Nanchong, 637002, China\\
  Email: {\tt tangchunmingmath@163.com} 
 }

\maketitle

\begin{abstract}
  Let $\fq$ be a finite field of size $q$ and $\fq^*$ the set of non-zero elements of $\fq$. In this paper, we study a class of twisted generalized Reed-Solomon code $\mc_\ell(D, k, \eta, \vec{v})\subset \fq^n$ generated by the following matrix
\[
\left(\begin{array}{cccc}
v_{1} & v_{2} & \cdots & v_{n} \\
v_{1} \alpha_{1} & v_{2} \alpha_{2} & \cdots & v_{n} \alpha_{n} \\
\vdots & \vdots & \ddots & \vdots \\
v_{1} \alpha_{1}^{\ell-1} & v_{2} \alpha_{2}^{\ell-1} & \cdots & v_{n} \alpha_{n}^{\ell-1} \\
v_{1} \alpha_{1}^{\ell+1} & v_{2} \alpha_{2}^{\ell+1} & \cdots & v_{n} \alpha_{n}^{\ell+1} \\
\vdots & \vdots & \ddots & \vdots \\
v_{1} \alpha_{1}^{k-1} & v_{2} \alpha_{2}^{k-1} & \cdots & v_{n} \alpha_{n}^{k-1} \\
v_{1}\left(\alpha_{1}^{\ell}+\eta\alpha_{1}^{q-{2}}\right) & v_{2}\left(\alpha_{2}^{\ell}+ \eta
\alpha_{2}^{q-2}\right) &\cdots & v_{n}\left(\alpha_{n}^{\ell}+\eta\alpha_{n}^{q-2}\right)
\end{array}\right)
\]
where $0\leq \ell\leq k-1,$ the evaluation set $D=\{\alpha_{1},\alpha_{2},\cdots, \alpha_{n}\}\subseteq \fq^*$, scaling  vector $\vec{v}=(v_1,v_2,\cdots,v_n)\in (\fq^*)^n$ and $\eta\in\fq^*$. The minimum distance and dual code of $\mc_\ell(D, k, \eta, \vec{v})$ will be determined.  
For the special case $\ell=k-1,$ a sufficient and necessary condition for $\mc_{k-1}(D, k, \eta, \vec{v})$ to be self-dual will be given. We will also show that the code is MDS or near-MDS. Moreover, a complete classification when the code is near-MDS or MDS will be presented. 

\keywords{twisted generalized Reed-Solomon code, self-dual code, near-MDS code, MDS code, subset product problem}
\end{abstract}

\section{Introduction}

Let $\fq$ be a finite field of size $q$. It is well-known that parameters $[n,k,d]$ of any linear code over the finite filed $\fq$ obey the Singleton bound $d\leq n-k+1.$ A linear code with parameters $[n,k,d]$ is called maximum distance separable (MDS) if the parameters satisfy $d=n-k+1.$ For the near-optimal case $d=n-k$, the linear code is called almost-MDS. Moreover, if a linear code and its dual code are almost-MDS at the same time, then the linear code is called near-MDS. Since MDS codes and near-MDS codes play important roles in coding theory and have many applications, the study of MDS codes and near-MDS codes, including classification problem, constructions and those with self-dual property, has attracted a lot of attention~\cite{BGGHK,DL94,DL00,FLLL,GK,GG,GKL08,HK06,JK19,KL04,KKS09,kk15,RL89,SQS18,SSS18,WHL21}. 
Generalized Reed-Solomon (GRS) codes form a very important class of MDS codes. Due to the easy encoding and fast decoding for few burst errors, they are used in many communication system. The decoding performance of GRS codes is always a very important issue in the theoretical computer science. In recent years, constructions of self-dual MDS codes via GRS codes become a hot topic~\cite{FF19,FZX21,JX17,LLL19,Yan19,ZF20}. After the twisted GRS (TGRS) codes were introduced in~\cite{BPN}, the properties of TGRS codes and constructions of self-dual TGRS codes are studied extensively~\cite{BGGHK,HYN20,HYN21,LR20,LL21,ZL21}.

 Let $\fq^*$ be the set of non-zero elements of $\fq$. 
For any subset $D\subseteq \fq$ of size $n$ and any vector $\vec{v}=(v_1,v_2,\cdots,v_n)\in (\fq^*)^n$, the GRS code $GRS(D, k, \vec{v})$ is generated by 
\[
G_k(D,\vec{v})=\left(\begin{array}{cccc}
v_{1} & v_{2} & \cdots & v_{n} \\
v_{1} \alpha_{1} & v_{2} \alpha_{2} & \cdots & v_{n} \alpha_{n} \\
\vdots & \vdots & \ddots & \vdots \\
v_{1} \alpha_{1}^{k-1} & v_{2} \alpha_{2}^{k-1} & \cdots & v_{n} \alpha_{n}^{k-1} 
\end{array}\right),
\]
over the finite field $\fq$. It is not hard to prove that GRS codes are MDS codes. Let $G(x)=\prod_{\alpha\in D}(x-\alpha)\in \fq[x]$. Denote by $G'(x)$ the formal derivative of $G(x)$, i.e., $G'(x)=\sum_{i=1}^{n}\prod_{j\neq i}(x-\alpha_{j})$. Let $\vec{u}=(u_1,u_2,\cdots, u_n)\in(\fq^*)^n$ be defined by $u_i=\frac{1}{G'(\alpha_i)}$. Then the dual of the GRS code $GRS(D, k, \vec{v})$ is another GRS code~\cite{Huf03,Mac77} with generator matrix $G_{n-k}(D, \vec{u}\odot \vec{v}^{-1})$ where 
\[
\vec{u}\odot \vec{v}^{-1}=\left(\frac{u_1}{v_1}, \frac{u_2}{v_2}, \cdots, \frac{u_n}{v_n}\right).
\]
Equivalently, we have equalities $\sum_{i=1}^nu_i\alpha_i^j=0$ for all $j=0,1,\cdots, n-2$.

In~\cite{BPN}, the authors generalized the definition of GRS codes to TGRS codes. 
In this paper, we consider the following TGRS codes. For any vector $\vec{v}=(v_1,v_2,\cdots,v_n)\in (\fq^*)^n$, any subset  $D\subseteq \fq^*$ of size $n$, any integer $2\leq k\leq n-1$, any integer $0\leq \ell\leq k-1$ and any $\eta\in \fq^*$, the TGRS code $\mc_\ell(D, k, \eta, \vec{v})$ is generated by the following matrix
\[
G_{k,\ell}(D,\eta,\vec{v})=\left(\begin{array}{cccc}
v_{1} & v_{2} & \cdots & v_{n} \\
v_{1} \alpha_{1} & v_{2} \alpha_{2} & \cdots & v_{n} \alpha_{n} \\
\vdots & \vdots & \ddots & \vdots \\
v_{1} \alpha_{1}^{\ell-1} & v_{2} \alpha_{2}^{\ell-1} & \cdots & v_{n} \alpha_{n}^{\ell-1} \\
v_{1} \alpha_{1}^{\ell+1} & v_{2} \alpha_{2}^{\ell+1} & \cdots & v_{n} \alpha_{n}^{\ell+1} \\
\vdots & \vdots & \ddots & \vdots \\
v_{1} \alpha_{1}^{k-1} & v_{2} \alpha_{2}^{k-1} & \cdots & v_{n} \alpha_{n}^{k-1} \\
v_{1}\left(\alpha_{1}^{\ell}+\eta\alpha_{1}^{q-{2}}\right) & v_{2}\left(\alpha_{2}^{\ell}+ \eta
\alpha_{2}^{q-2}\right) &\cdots & v_{n}\left(\alpha_{n}^{\ell}+\eta\alpha_{n}^{q-2}\right)
\end{array}\right).
\]
Note that if $\eta=0$ in the above generator matrix, then the corresponding linear code is the GRS code $GRS(D, k, \vec{v})$. So we choose non-zero $\eta$ in the definition of the TGRS code. 
The main task in this paper is to study the properties of the TGRS code $\mc_\ell(D, k, \eta, \vec{v})$ such as the minimum distance, the dual code, conditions to be self-dual and near-MDS or MDS. 

We make a convention that all the notations above apply to the whole paper.

The rest of this paper is organized as follows. In Section~\ref{Sec:Dual}, the dual code of the TGRS code $\mc_\ell(D, k, \eta, \vec{v})$ is given explicitly and the self-dual property for the case $\ell=k-1$ is completely characterized. In Section~\ref{Sec:mindist}, the minimum distance of the TGRS code $\mc_\ell(D, k, \eta, \vec{v})$ is computed. Based on the computation, the TGRS code $\mc_{k-1}(D, k, \eta, \vec{v})$ is near-MDS or MDS if and only if certain subset product problem on the finite field is solvable or not. In Section~\ref{Sec:conc}, we conclude this paper.

\section{ The Dual Code of the TGRS code $\mc_\ell(D, k, \eta, \vec{v})$}\label{Sec:Dual}
In this section, we determine the dual code of the TGRS code $\mc_\ell(D, k, \eta, \vec{v})$. Generally, the TGRS code $\mc_\ell(D, k, \eta, \vec{v})$ is not self-dual. A sufficient and necessary condition is given for the TGRS code $\mc_{k-1}(D, k, \eta, \vec{v})$ being self-dual.

\begin{theorem}\label{dualcode}
	Let $b_0=1$ and $b_1,b_2,\cdots,b_{k-\ell-1}\in\fq$ be defined by the following recursion
	\begin{align}\label{recursion}
     b_j=-\frac{\sum_{r=0}^{j-1}b_r\sum_{i=1}^nu_i \alpha_i^{n+j-1-r}}{\sum_{i=1}^nu_i \alpha_i^{n-1}},\quad j=1,2,\cdots,k-\ell-1.
	\end{align}
	The TGRS code $\mc_\ell(D, k, \eta, \vec{v})$ has a parity-check matrix
	\[
	\left(\begin{array}{ccc}
	\frac{u_{1}}{v_{1}} f\left(\alpha_{1}\right) & \frac{u_{2}}{v_{2}} f\left(\alpha_{2}\right) \,\,\,\qquad \cdots & \frac{u_{n}}{v_{n}} f\left(\alpha_{n}\right) \\
	\hline\\
	 &G_{n-k-1}(D,\vec{u}\odot\vec{v}^{-1}\odot\vec{\alpha})
	\end{array}\right)
	\]
	 where $f(x)=x^{n-\ell-1}+b_1x^{n-\ell-2}+\cdots+b_{k-\ell-1}x^{n-k}-\frac{\sum_{i=1}^nu_i \alpha_i^{n-1}}{\eta\sum_{i=1}^nu_i \alpha_i^{-1}}.$
\end{theorem}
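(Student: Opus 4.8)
The plan is to verify directly that the proposed $(n-k)\times n$ matrix $H$ is a parity-check matrix of $\mc_\ell(D,k,\eta,\vec v)$, that is, to check that (i) every row of $H$ is orthogonal to every row of $G_{k,\ell}(D,\eta,\vec v)$, and (ii) $H$ has rank $n-k$. Since $\dim\mc_\ell(D,k,\eta,\vec v)=k$ (a short degree argument using $k\le n-1$ and $\eta\ne 0$ shows $G_{k,\ell}$ has full row rank), conditions (i) and (ii) force the $n-k$ rows of $H$ to span all of $\mc_\ell(D,k,\eta,\vec v)^\perp$. Throughout I would write $P_e=\sum_{i=1}^n u_i\alpha_i^e$ and use the standard Lagrange-interpolation identities $P_e=0$ for $0\le e\le n-2$ and $P_{n-1}=1$. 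I would also record at the outset that $P_{-1}=\sum_i u_i\alpha_i^{-1}=-1/G(0)\ne 0$, which holds precisely because $0\notin D\subseteq\fq^*$ forces $G(0)\ne0$; this is what makes the constant term of $f$ well defined. For convenience I abbreviate $c=\frac{\sum_i u_i\alpha_i^{n-1}}{\eta\sum_i u_i\alpha_i^{-1}}$, so that $f(x)=\sum_{j=0}^{k-\ell-1}b_j x^{n-\ell-1-j}-c$ and $f$ has constant term $-c$.

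The orthogonality splits into three pieces, using $\alpha_i^{q-2}=\alpha_i^{-1}$. First, pairing the lower block $G_{n-k-1}(D,\vec u\odot\vec v^{-1}\odot\vec\alpha)$, whose rows are $\tfrac{u_i}{v_i}\alpha_i^{s}$ for $1\le s\le n-k-1$, against any generator row reduces to $P_{m+s}$ with $0\le m+s\le n-2$ for the untwisted rows, and to $P_{\ell+s}+\eta P_{s-1}$ with both indices in $[0,n-2]$ for the twisted row; all these power sums vanish, so the entire lower block is orthogonal to $\mc_\ell(D,k,\eta,\vec v)$. The work is concentrated in the top row $\tfrac{u_i}{v_i}f(\alpha_i)$. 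Pairing it with the untwisted row of exponent $m\ne\ell$ yields $\sum_{j=0}^{k-\ell-1}b_j P_{m+n-\ell-1-j}-cP_m$; here $P_m=0$, and for $m<\ell$ every exponent $m+n-\ell-1-j$ already lies in $[0,n-2]$, so the expression vanishes automatically, while for $m>\ell$, writing $m=\ell+p$ collapses the sum to $b_p+\sum_{j=0}^{p-1}b_j P_{n+p-1-j}$, which is exactly what the recursion~(\ref{recursion}) sets to zero. Thus the recursion is engineered precisely to kill these pairings.

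The crux, and the step I expect to be most delicate, is orthogonality of the top row with the twisted generator row, namely $\sum_i u_i\alpha_i^{\ell}f(\alpha_i)+\eta\sum_i u_i\alpha_i^{-1}f(\alpha_i)=0$. For the first term, taking $m=\ell$ the $b_j$-sum telescopes to $b_0P_{n-1}=1$ (all other exponents $n-1-j$ with $j\ge1$ lie in $[1,n-2]$) while $cP_\ell=0$, so it equals $1$. For the second term I would expand $\alpha_i^{-1}f(\alpha_i)=\sum_j b_j\alpha_i^{n-\ell-2-j}-c\alpha_i^{-1}$; the subtle point is that the exponents $n-\ell-2-j$ run over the interval $[n-k-1,\,n-\ell-2]\subseteq[0,n-2]$ (using $0\le\ell\le k-1\le n-2$, and checking the boundary cases $\ell=0$ and $j=k-\ell-1$), so all of those power sums vanish and only $-cP_{-1}$ survives. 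Since $c=P_{n-1}/(\eta P_{-1})=1/(\eta P_{-1})$, the second term equals $\eta(-cP_{-1})=-1$, and the two contributions cancel. This is exactly where the precise value of the constant term of $f$, together with the nonvanishing of $P_{-1}$, is indispensable.

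Finally, for the rank I would argue at the level of polynomials: a vanishing $\fq$-linear combination of the rows of $H$ corresponds to a polynomial $\mu f(x)+\sum_{s=1}^{n-k-1}\lambda_s x^{s}$ of degree $<n$ that vanishes at all $n$ points of $D$, hence is identically zero. Because $\deg f=n-\ell-1\ge n-k>n-k-1$, the coefficient $\mu$ of $x^{n-\ell-1}$ must vanish, and then all $\lambda_s=0$; so the $n-k$ rows are linearly independent and $H$ has rank $n-k$, which completes the verification that $H$ is a parity-check matrix.
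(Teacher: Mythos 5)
Your verification is correct, but it is packaged differently from the paper's argument, which \emph{derives} $f$ rather than checking it. The paper sandwiches $\mc_\ell(D,k,\eta,\vec v)^\perp$ between $GRS(D,k+1,\vec v\odot\vec\alpha^{-1})^\perp$ (which supplies the block $G_{n-k-1}(D,\vec u\odot\vec v^{-1}\odot\vec\alpha)$ and has codimension one) and $GRS(D,\ell,\vec v)^\perp=GRS(D,n-\ell,\vec u\odot\vec v^{-1})$; consequently the single missing generator must have the form $\left(\frac{u_i}{v_i}f(\alpha_i)\right)_i$ with $\deg f\le n-\ell-1$, the middle monomials $x,\dots,x^{n-k-1}$ can be discarded modulo the lower block, and orthogonality need only be imposed against the generator rows of exponent $\ell+1,\dots,k-1$ and the twisted row, which produces exactly the triangular system defining the $b_j$ and the constant term. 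You instead check every pairing of every row of $H$ against every row of $G_{k,\ell}(D,\eta,\vec v)$ and supply a rank count; this is more computation but fully self-contained, and the extra pairings you handle (the lower block against all generator rows, and the top row against exponents $m<\ell$) are precisely the ones the paper gets for free from the two inclusions. Your use of the sharper identities $\sum_{i}u_i\alpha_i^{n-1}=1$ and $\sum_iu_i\alpha_i^{-1}=-1/G(0)\ne 0$ is a genuine simplification --- the paper only establishes nonvanishing of these sums, via a Vandermonde argument in the remark --- and your analysis of the twisted pairing, where the two contributions $1$ and $-\eta\,c\sum_iu_i\alpha_i^{-1}=-1$ cancel, correctly isolates the role of the constant term of $f$. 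The underlying power-sum computations coincide, so the difference is one of exposition: the paper's route explains where $f$ comes from, while yours confirms independently that the stated matrix works.
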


\begin{remark}
	The denominators  $\sum_{i=1}^nu_i \alpha_i^{-1}, \sum_{i=1}^nu_i \alpha_i^{n-1}$ in the above theorem are non-zero, so the fractions above do make sense. 
	
 If $\sum_{i=1}^nu_i \alpha_i^{-1}=0$, then $(u_1, u_2,\cdots, u_n)$ is a solution of the system of linear equations
\[
\left(\begin{array}{cccc}
\alpha_{1}^{-1} & \alpha_{2}^{-1} & \cdots & \alpha_{n}^{-1} \\
1 & 1 & \cdots & 1\\
\alpha_{1} & \alpha_{2} & \cdots & \alpha_{n} \\
\vdots & \vdots & \ddots & \vdots \\
\alpha_{1}^{n-2} & \alpha_{2}^{n-2} & \cdots & \alpha_{n}^{n-2} 
\end{array}\right)X^T=0.
\]
The coefficient matrix is non-degenerated by Vandermonde determinant formula. So the system has only zero solution, which contradicts to that $(u_1, u_2,\cdots, u_n)$ is non-zero. So $\sum_{i=1}^nu_i \alpha_i^{-1}\neq 0$. By using the same argument, one can show that $\sum_{i=1}^nu_i \alpha_i^{n-1}\neq 0$.

\end{remark}

\begin{proof}
Note that 
\[
GRS(D,\ell,\vec{v})\subsetneq\mc_\ell(D, k, \eta, \vec{v}) \subsetneq GRS(D,k+1,\vec{v}\odot \vec{\alpha}^{-1}) 
\]
where $\vec{\alpha}=(\alpha_{1}, \alpha_{2},\cdots, \alpha_n)$. So we have 
\[
GRS(D,k+1,\vec{v}\odot \vec{\alpha}^{-1}) ^\perp\subsetneq \mc_\ell(D, k, \eta, \vec{v}) ^\perp\subsetneq GRS(D,\ell,\vec{v})^{\perp}.
\]
The code $GRS(D,k+1,\vec{v}\odot \vec{\alpha}^{-1}) ^\perp$ has a generator matrix 
\[
G_{n-k-1}(D,\vec{u}\odot\vec{v}^{-1}\odot\vec{\alpha})=\left(\begin{array}{cccc}
\frac{u_1}{v_1}\alpha_1 & \frac{u_2}{v_2}\alpha_2 & \cdots & \frac{u_n}{v_n}\alpha_{n} \\
\frac{u_1}{v_1} \alpha_{1}^2 & \frac{u_2}{v_2}\alpha_{2}^2 & \cdots & \frac{u_n}{v_n}\alpha_{n}^2 \\
\vdots & \vdots & \ddots & \vdots \\
\frac{u_1}{v_1}\alpha_{1}^{n-k-1} & \frac{u_2}{v_2} \alpha_{2}^{n-k-1} & \cdots & \frac{u_n}{v_n} \alpha_{n}^{n-k-1} 
\end{array}\right).
\]

It is easy to see that $GRS(D,k+1,\vec{v}\odot \vec{\alpha}^{-1}) ^\perp$ has codimension $1$ in $ \mc_\ell(D, k, \eta, \vec{v}) ^\perp$. 
Since $\mc_\ell(D, k, \eta, \vec{v}) ^\perp\subsetneq GRS(D,\ell,\vec{v})^{\perp}=GRS(D,n-\ell,\vec{u}\odot \vec{v}^{-1})$, we may consider non-zero polynomials of the form $f(x)=b_0x^{n-1-\ell}+b_1x^{n-\ell-2}+\cdots+b_{k-\ell-1}x^{n-k}+b\in\fq[x]$ (here, the terms $x^{n-k-1}, x^{n-k-2},\cdots, x$ are absorbed in $GRS(D,k+1,\vec{v}\odot \vec{\alpha}^{-1}) ^\perp$) where $b_0,b_1,\cdots, b_{k-\ell-1}$ and $b$ are to be determined. 

On one hand, the vector $\left(\frac{u_{1}}{v_{1}} f\left(\alpha_{1}\right) ,\frac{u_{2}}{v_{2}} f\left(\alpha_{2}\right),\cdots ,\frac{u_{n}}{v_{n}} f\left(\alpha_{n}\right)\right) $ does not belong to $GRS(D,k+1,\vec{v}\odot \vec{\alpha}^{-1}) ^\perp$. If not, there is a polynomial $A(x)=a_1x+a_2x^2+\cdots+a_{n-k-1}x^{n-k-1}\in \fq[x]$ such that
$\frac{u_i}{v_i}f(\alpha_i)=\frac{u_i}{v_i}A(\alpha_i)$ for all $i=1,2,\cdots,n$ which implies that the polynomial $f(x)-A(x)$ has at least $n$ different roots. But the degree of $f(x)-A(x)$ is at most $n-1-\ell\leq n-1.$ So as polynomials, $f(x)=A(x)$ which is impossible! 

On the other hand, the vector $\left(\frac{u_{1}}{v_{1}} f\left(\alpha_{1}\right) ,\frac{u_{2}}{v_{2}} f\left(\alpha_{2}\right),\cdots ,\frac{u_{n}}{v_{n}} f\left(\alpha_{n}\right)\right) $ belongs to $\mc_\ell(D, k, \eta, \vec{v}) ^\perp$ if and only if the following system of equalities holds
\[
\begin{cases}
\sum_{i=1}^n\frac{u_{i}}{v_{i}} f\left(\alpha_{i}\right)v_i\alpha_{i}^{\ell+1}=0\\
\sum_{i=1}^n\frac{u_{i}}{v_{i}} f\left(\alpha_{i}\right)v_i\alpha_{i}^{\ell+2}=0\\
\cdots\\
\sum_{i=1}^n\frac{u_{i}}{v_{i}} f\left(\alpha_{i}\right)v_i\alpha_{i}^{k-1}=0\\
\sum_{i=1}^n\frac{u_{i}}{v_{i}} f\left(\alpha_{i}\right)v_i \left(\alpha_{i}^{\ell}+\eta\alpha_{i}^{q-2}\right)=0.
\end{cases}
\]
Since $\alpha_{i}\in \fq^*$, we have $\alpha_{i}^{q-2}=\alpha_{i}^{-1}$ for all $i=1,2,\cdots, n$. So it follows from the equalities $$\sum_{i=1}^nu_i\alpha_i^j=0,\,\forall j=0,1,\cdots, n-2$$ that 
\[
\begin{cases}
b_0\sum_{i=1}^nu_i \alpha_i^{n}+b_1\sum_{i=1}^nu_i \alpha_i^{n-1}=0\\
b_0\sum_{i=1}^nu_i \alpha_i^{n+1}+b_1\sum_{i=1}^nu_i \alpha_i^{n}+b_2\sum_{i=1}^nu_i \alpha_i^{n-1}=0\\
\cdots\\
b_0\sum_{i=1}^nu_i \alpha_i^{n+k-\ell-2}+b_1\sum_{i=1}^nu_i \alpha_i^{n+k-\ell-3}+\cdots+b_{k-\ell-1}\sum_{i=1}^nu_i \alpha_i^{n-1}=0\\
b_0\sum_{i=1}^nu_i \alpha_i^{n-1}+b\eta\sum_{i=1}^nu_i \alpha_i^{-1}=0.
\end{cases}
\]

Note that $b_0\neq 0$, so we can assume $b_0=1$ by linearity. If $b_0=0$, then it follows from the first and last equalities that $b_1=b=0$ since $\sum_{i=1}^nu_i \alpha_i^{n-1}\neq 0$, $\sum_{i=1}^nu_i \alpha_i^{-1}\neq 0$ and $\eta\neq 0$. As a consequence of $b_0=b_1=0$, we have $b_2=0$ from the second equality. Similarly, we can get $b_3=\cdots=b_{k-\ell-1}=0$ and hence $f(x)=0$ which contradicts to the assumption that $f(x)$ is non-zero.

So by solving the above system of equations and by assumption $b_0=1$, we can obtain that elements $b_1, b_2,\cdots, b_{k-\ell-1}$ indeed satisfy the recursive condition~(\ref{recursion}) and 
\[
b=-\frac{\sum_{i=1}^nu_i \alpha_i^{n-1}}{\eta\sum_{i=1}^nu_i \alpha_i^{-1}}.
\]

\end{proof}

\begin{corollary}\label{dual:k-1}
	The dual of the TGRS code $\mc_{k-1}(D, k, \eta, \vec{v})$ is another TGRS code $\mc_{n-k-1}(D, n-k, \eta', \vec{u}\odot\vec{v}^{-1}\odot\vec{\alpha})$
	where $\eta'=-\frac{\sum_{i=1}^nu_i \alpha_i^{n-1}}{\eta\sum_{i=1}^nu_i \alpha_i^{-1}}.$
\end{corollary}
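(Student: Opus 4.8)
The plan is to specialize Theorem~\ref{dualcode} to $\ell=k-1$ and then recognize the resulting parity-check matrix as the defining generator matrix of a TGRS code. Since a parity-check matrix of a code is a generator matrix of its dual, it suffices to match the matrix produced by Theorem~\ref{dualcode} for $\mc_{k-1}(D,k,\eta,\vec{v})$ with the matrix $G_{n-k,\,n-k-1}(D,\eta',\vec{u}\odot\vec{v}^{-1}\odot\vec{\alpha})$ that defines $\mc_{n-k-1}(D,n-k,\eta',\vec{u}\odot\vec{v}^{-1}\odot\vec{\alpha})$.

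First I would substitute $\ell=k-1$ into the data of Theorem~\ref{dualcode}. The recursion~(\ref{recursion}) runs over $j=1,\dots,k-\ell-1=0$, hence it is vacuous and no coefficients $b_1,\dots,b_{k-\ell-1}$ arise. Consequently the polynomial $f$ collapses to
\[
f(x)=x^{\,n-\ell-1}-\frac{\sum_{i=1}^n u_i\alpha_i^{n-1}}{\eta\sum_{i=1}^n u_i\alpha_i^{-1}}=x^{\,n-k}+\eta',
\]
using $n-\ell-1=n-k$ and the definition of $\eta'$. Thus the parity-check matrix supplied by Theorem~\ref{dualcode} has top row with $i$-th entry $\frac{u_i}{v_i}\bigl(\alpha_i^{\,n-k}+\eta'\bigr)$ sitting above the block $G_{n-k-1}(D,\vec{u}\odot\vec{v}^{-1}\odot\vec{\alpha})$, whose rows have $i$-th entries $\frac{u_i}{v_i}\alpha_i^{\,j}$ for $j=1,\dots,n-k-1$.

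Next I would write down the generator matrix of the proposed dual and compare row by row. Set $\vec{w}=\vec{u}\odot\vec{v}^{-1}\odot\vec{\alpha}$, so $w_i=u_i\alpha_i/v_i$, and put $k'=n-k$, $\ell'=n-k-1=k'-1$. Because $\ell'=k'-1$, in $G_{k',\ell'}(D,\eta',\vec{w})$ the twisted row is the top-degree row and the monomial rows are $w_i\alpha_i^{\,j}$ for $j=0,\dots,k'-2=n-k-2$. Substituting $w_i=u_i\alpha_i/v_i$ shifts every exponent up by one: the monomial rows become $\frac{u_i}{v_i}\alpha_i^{\,j}$ for $j=1,\dots,n-k-1$, while the twisted row becomes
\[
w_i\bigl(\alpha_i^{\,n-k-1}+\eta'\alpha_i^{q-2}\bigr)=\frac{u_i\alpha_i}{v_i}\bigl(\alpha_i^{\,n-k-1}+\eta'\alpha_i^{-1}\bigr)=\frac{u_i}{v_i}\bigl(\alpha_i^{\,n-k}+\eta'\bigr),
\]
where I used $\alpha_i^{q-2}=\alpha_i^{-1}$. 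These rows coincide, up to reordering, with those of the parity-check matrix above: the $n-k-1$ monomial rows match the block $G_{n-k-1}(D,\vec{w})$, and the single twisted row matches the top row. Hence the two matrices have the same row span, and the dual of $\mc_{k-1}(D,k,\eta,\vec{v})$ equals $\mc_{n-k-1}(D,n-k,\eta',\vec{w})$.

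I do not expect a genuine obstacle, as the statement is a direct specialization of Theorem~\ref{dualcode}. The only point demanding care is the bookkeeping: verifying that for $\ell'=k'-1$ the twist occupies the top-degree coordinate, and that the scaling by $\alpha_i$ concealed in $\vec{w}$ is exactly what converts the exponent range $\{0,\dots,n-k-2\}$ of the monomials $w_i\alpha_i^{\,j}$ into $\{1,\dots,n-k-1\}$, together with turning the twisted term into the entry $\frac{u_i}{v_i}(\alpha_i^{\,n-k}+\eta')$ that appears in the parity-check matrix.
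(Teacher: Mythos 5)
Your proposal is correct and follows essentially the same route as the paper: specialize Theorem~\ref{dualcode} to $\ell=k-1$ so that $f(x)=x^{n-k}+\eta'$, then factor out $\alpha_i$ from each column (using $\alpha_i^{-1}=\alpha_i^{q-2}$) to recognize the parity-check matrix as the generator matrix $G_{n-k,\,n-k-1}(D,\eta',\vec{u}\odot\vec{v}^{-1}\odot\vec{\alpha})$. The bookkeeping you carry out is exactly the rewriting the paper performs.
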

\begin{proof}
By Theorem~\ref{dualcode}, the dual of the TGRS code $\mc_{k-1}(D, k, \eta, \vec{v})$ has a generator matrix of the following form
\[
\left(\begin{array}{cccc}
\frac{u_1}{v_1}(\alpha_1^{n-k}+\eta') & \frac{u_2}{v_2}(\alpha_2^{n-k}+\eta') & \cdots & \frac{u_n}{v_n}(\alpha_n^{n-k}+\eta') \\
\frac{u_1}{v_1}\alpha_1 & \frac{u_2}{v_2}\alpha_2 & \cdots & \frac{u_n}{v_n}\alpha_{n} \\
\frac{u_1}{v_1} \alpha_{1}^2 & \frac{u_2}{v_2}\alpha_{2}^2 & \cdots & \frac{u_n}{v_n}\alpha_{n}^2 \\
\vdots & \vdots & \ddots & \vdots \\
\frac{u_1}{v_1}\alpha_{1}^{n-k-1} & \frac{u_2}{v_2} \alpha_{2}^{n-k-1} & \cdots & \frac{u_n}{v_n} \alpha_{n}^{n-k-1} 
\end{array}\right),
\]
where $\eta'=-\frac{\sum_{i=1}^nu_i \alpha_i^{n-1}}{\eta\sum_{i=1}^nu_i \alpha_i^{-1}}.$ We can rewrite the above generator matrix as following
\[
\left(\begin{array}{cccc}
\frac{u_1\alpha_1}{v_1}1 & \frac{u_2\alpha_2}{v_2} 1 & \cdots & \frac{u_n\alpha_n}{v_n}1 \\
\frac{u_1\alpha_1}{v_1}\alpha_{1} & \frac{u_2\alpha_2}{v_2} \alpha_{2} & \cdots & \frac{u_n\alpha_n}{v_n}\alpha_{n} \\
\vdots & \vdots & \ddots & \vdots \\
\frac{u_1\alpha_1}{v_1}\alpha_{1}^{n-k-2} & \frac{u_2\alpha_2}{v_2} \alpha_{2}^{n-k-2} & \cdots & \frac{u_n\alpha_n}{v_n} \alpha_{n}^{n-k-2} \\
\frac{u_1\alpha_1}{v_1}(\alpha_1^{n-k-1}+\eta'\alpha_1^{-1}) & \frac{u_2\alpha_2}{v_2}(\alpha_2^{n-k-1}+\eta'\alpha_2^{-1}) & \cdots & \frac{u_n\alpha_n}{v_n}(\alpha_n^{n-k-1}+\eta'\alpha_n^{-1}) \\
\end{array}\right).
\]
Since $\alpha_i\in \fq^*$, we have $\alpha_i^{-1}=\alpha_i^{q-2}$ for all $i=1,2,\cdots, n$. So the dual of the TGRS code $\mc_{k-1}(D, k, \eta, \vec{v})$ is the TGRS code $\mc_{n-k-1}(D, n-k, \eta', \vec{u}\odot\vec{v}^{-1}\odot\vec{\alpha})$.
\end{proof}

Now, we can determine when the TGRS code $\mc_{k-1}(D, k, \eta, \vec{v})$  is self-dual. 
\begin{theorem}\label{selfdual}
	For $n=2k,$ the TGRS code $\mc_{k-1}(D, k, \eta, \vec{v})$  is self-dual if and only if the following two conditions hold:
	\begin{enumerate}
		\item there exists some $\lambda\in\fq^*$ such that ${v_i^2}=\lambda{u_i\alpha_{i}},\,\forall i=1,2,\cdots,n$;
		\item $\eta^2=-\frac{\sum_{i=1}^nu_i \alpha_i^{n-1}}{\sum_{i=1}^nu_i \alpha_i^{-1}}.$
	\end{enumerate}
\end{theorem}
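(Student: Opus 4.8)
The plan is to reduce self-duality to self-orthogonality and then to solve a small linear system in the unknowns $v_1^2,\dots,v_n^2$ using the defining identities $\sum_{i=1}^n u_i\alpha_i^m=0$ for $0\le m\le n-2$.

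First I record that $\mc_{k-1}(D,k,\eta,\vec v)$ has dimension exactly $k$: the proof of Theorem~\ref{dualcode} exhibits a parity-check matrix with $n-k$ rows and shows them to be linearly independent, so $\dim\mc_{k-1}(D,k,\eta,\vec v)=k$. Since $n=2k$ this equals $n/2$, whence the code $C:=\mc_{k-1}(D,k,\eta,\vec v)$ is self-dual if and only if it is self-orthogonal. (Corollary~\ref{dual:k-1} independently shows $C^{\perp}$ is again a code of the same shape, a useful consistency check that I will not otherwise need.) Writing $\alpha_i^{q-2}=\alpha_i^{-1}$, a codeword of $C$ is $\big(v_i f(\alpha_i)\big)_{1\le i\le n}$ with $f$ ranging over the $k$-dimensional space
\[
V=\mathrm{span}_{\fq}\{\,1,\,x,\,\dots,\,x^{k-2},\,x^{k-1}+\eta x^{-1}\,\}.
\]
Thus $C$ is self-orthogonal exactly when $\sum_{i=1}^n v_i^2\,f(\alpha_i)\,g(\alpha_i)=0$ for all $f,g\in V$.

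By bilinearity it suffices to test this on products of basis elements of $V$, so everything is controlled by the power sums $S_m:=\sum_{i=1}^n v_i^2\alpha_i^m$. The products $x^ax^b$ with $0\le a,b\le k-2$ force $S_m=0$ for $0\le m\le 2k-4$. Multiplying $x^a$ by the twisted generator gives $x^{a+k-1}+\eta x^{a-1}$; for $1\le a\le k-2$ the low power sum $S_{a-1}$ already vanishes, leaving $S_m=0$ for $k\le m\le n-3$, while $a=0$ combined with the vanishing of $S_{k-1}$ leaves $\eta S_{-1}=0$, hence $S_{-1}=0$ as $\eta\ne0$. Finally $(x^{k-1}+\eta x^{-1})^2=x^{2k-2}+2\eta x^{k-2}+\eta^2x^{-2}$ together with $S_{k-2}=0$ leaves the single coupled relation $S_{n-2}+\eta^2S_{-2}=0$. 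Altogether $C$ is self-orthogonal if and only if
\[
S_m=0\ \ (m=-1,0,1,\dots,n-3)\quad\text{and}\quad S_{n-2}+\eta^2S_{-2}=0.
\]
To finish I solve this system. The $n-2$ equations $S_0=\dots=S_{n-3}=0$ say that $(v_1^2,\dots,v_n^2)$ lies in the kernel of the $(n-2)\times n$ matrix with rows $(\alpha_1^m,\dots,\alpha_n^m)$, $0\le m\le n-3$, which has rank $n-2$, so this solution space is $2$-dimensional. Both $\vec u$ and $\vec u\odot\vec\alpha$ lie in it: the identities $\sum_{i=1}^n u_i\alpha_i^m=0$ ($0\le m\le n-2$) immediately give $\sum_i u_i\alpha_i^m=0$ for $0\le m\le n-3$, and also $\sum_i (u_i\alpha_i)\alpha_i^m=\sum_i u_i\alpha_i^{m+1}=0$ for $0\le m\le n-3$; moreover the two vectors are independent since the $\alpha_i$ are distinct. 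Hence $v_i^2=a\,u_i+b\,u_i\alpha_i$ for some $a,b\in\fq$. The relation $S_{-1}=0$ becomes $a\sum_i u_i\alpha_i^{-1}+b\sum_i u_i=a\sum_i u_i\alpha_i^{-1}=0$, and since $\sum_i u_i\alpha_i^{-1}\ne0$ we get $a=0$; putting $\lambda:=b$ (nonzero as $v_i\ne0$) yields Condition~1, namely $v_i^2=\lambda u_i\alpha_i$. Substituting into the last relation gives $\lambda\big(\sum_i u_i\alpha_i^{n-1}+\eta^2\sum_i u_i\alpha_i^{-1}\big)=0$, i.e.\ Condition~2. The converse is the same computation read in reverse: Conditions~1 and 2 make every required $S_m$ vanish, so $C$ is self-orthogonal and, having dimension $n/2$, self-dual.

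The delicate step is the bookkeeping of the second paragraph: one must verify that, after discarding the lower power sums already known to vanish, the products of basis elements really do pin down $S_m$ on the full range $m=-1,0,\dots,n-3$ and isolate the single relation $S_{n-2}+\eta^2S_{-2}=0$. This decoupling is exactly what licenses the Vandermonde dimension count, and it is the point most sensitive to the magnitude of $k$, so the exponent ranges should be tracked with care.
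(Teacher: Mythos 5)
Your proof is correct (for $k\ge 3$) and takes a genuinely different route from the paper's. The paper proves necessity by comparing generator matrices: it uses the explicit description of $\mc_{k-1}(D,k,\eta,\vec v)^{\perp}$ from Corollary~\ref{dual:k-1}, demands that the rows $(v_i)_i$ and $(v_i\alpha_i^{k-2})_i$ of the generator matrix lie in the row space of the dual's generator matrix, and extracts condition~1 from the identity $h_{\vec a}(x)x^{k-2}-h_{\vec b}(x)=0$ (a polynomial of degree at most $n-2$ with $n$ roots); condition~2 then falls out from matching $\eta'=\eta$, and sufficiency is dismissed as ``obvious from the proof above.'' You instead bypass the dual-code description entirely: self-duality is reduced to self-orthogonality by the dimension count, self-orthogonality is encoded as the vanishing of the power sums $S_m=\sum_i v_i^2\alpha_i^m$, and the resulting linear system is solved by identifying the $2$-dimensional kernel of the Vandermonde system as $\mathrm{span}\{\vec u,\vec u\odot\vec\alpha\}$. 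What this buys is a cleanly reversible argument -- the sufficiency direction really is the same computation read backwards -- and independence from Corollary~\ref{dual:k-1} (you use Theorem~\ref{dualcode} only for $\dim\mc_{k-1}=k$, and you still need the paper's remark that $\sum_i u_i\alpha_i^{-1}\neq 0$). The one caveat is the boundary case you yourself flag: the decoupling of the second paragraph requires $k\ge 3$, since for $k=2$ (so $n=4$) the raw orthogonality relations are $S_0=0$, $S_1+\eta S_{-1}=0$, $S_2+\eta^2S_{-2}=0$ and do not immediately separate into $S_{-1}=S_0=S_1=0$; that case needs a short direct verification. This is a shared weakness rather than one specific to your argument -- the paper's coefficient comparison for $h_{\vec a}(x)x^{k-2}-h_{\vec b}(x)$ likewise degenerates when $k=2$ -- but it should be patched or the hypothesis $k\ge 3$ stated.
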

\begin{proof}
(The necessary direction ``$\implies$'')
Recall that the TGRS code $\mc_{k-1}(D, k, \eta, \vec{v})$ and its dual have generator matrices
\[
\left(\begin{array}{cccc}
v_{1} & v_{2} & \cdots & v_{n} \\
v_{1} \alpha_{1} & v_{2} \alpha_{2} & \cdots & v_{n} \alpha_{n} \\
\vdots & \vdots & \ddots & \vdots \\
v_{1} \alpha_{1}^{k-2} & v_{2} \alpha_{2}^{k-2} & \cdots & v_{n} \alpha_{n}^{k-2} \\
v_{1}\left(\alpha_{1}^{k-1}+\eta\alpha_{1}^{q-{2}}\right) & v_{2}\left(\alpha_{2}^{k-1}+ \eta
\alpha_{2}^{q-2}\right) &\cdots & v_{n}\left(\alpha_{n}^{k-1}+\eta\alpha_{n}^{q-2}\right)
\end{array}\right)
\]
and 
\[
\left(\begin{array}{cccc}
\frac{u_1}{v_1}(\alpha_1^k+\eta')& \frac{u_2}{v_2}(\alpha_2^k+\eta') & \cdots & \frac{u_n}{v_n}(\alpha_n^k+\eta')  \\
\frac{u_1}{v_1}\alpha_1 & \frac{u_2}{v_2}\alpha_2 & \cdots & \frac{u_n}{v_n}\alpha_{n} \\
\frac{u_1}{v_1} \alpha_{1}^2 & \frac{u_2}{v_2}\alpha_{2}^2 & \cdots & \frac{u_n}{v_n}\alpha_{n}^2 \\
\vdots & \vdots & \ddots & \vdots \\
\frac{u_1}{v_1}\alpha_{1}^{k-1} & \frac{u_2}{v_2} \alpha_{2}^{k-1} & \cdots & \frac{u_n}{v_n} \alpha_{n}^{k-1} 
\end{array}\right),
\]
respectively, where $\eta'=-\frac{\sum_{i=1}^nu_i \alpha_i^{n-1}}{\eta\sum_{i=1}^nu_i \alpha_i^{-1}}$. So if the TGRS code $\mc_{k-1}(D, k, \eta, \vec{v})$  is self-dual, then 
\begin{itemize}
	\item there exists $\left(a_{0}, a_{1}, \cdots, a_{k-1}\right) \in \mathbb{F}_{q}^{k} $  such that 
	\[
	\frac{v_{i}^{2}}{u_{i}}=a_{0}\left(\alpha_{i}^{k}+\eta^{\prime}\right)+a_{1} \alpha_{i}+\cdots+a_{k-1} \alpha_{i}^{k-1}, \quad i=1,2, \cdots, n;
	\]
    \item there exists $\left(b_{0}, b_{1}, \cdots, b_{k-1}\right) \in \mathbb{F}_{q}^{k} $  such that 
    \[
     \frac{v_{i}^{2}}{u_{i}}\alpha_{i}^{k-2}=b_{0}\left(\alpha_{i}^{k}+\eta^{\prime}\right)+b_{1} \alpha_{i}+\cdots+b_{k-1} \alpha_{i}^{k-1}, \quad i=1,2, \cdots, n.
    \]
\end{itemize}
For any vector $\vec{a}=\left(a_{0}, a_{1}, \cdots, a_{k-1}\right) \in \mathbb{F}_{q}^{k} $, let
\[
h_{\vec{a}}(x)=a_{0}\left(x^{k}+\eta^{\prime}\right)+a_{1} x+\cdots+a_{k-1} x^{k-1}\in\fq[x].
\]
Then the polynomial $h_{\vec{a}}(x)x^{k-2}-h_{\vec{b}}(x)$ has $n$ roots $\alpha_{1}, \alpha_{2}, \cdots, \alpha_{n}$ and degree at most $2k-2=n-2$. So  $h_{\vec{a}}(x)x^{k-2}-h_{\vec{b}}(x)$ is the zero polynomial. By looking at the coefficients, we get that $a_0=a_2=a_3=\cdots=a_{k-1}=0$, $b_0=b_1=\cdots=b_{k-2}=0$, and $a_1=b_{k-1}\in \fq^*.$ Suppose $a_1=b_{k-1}=\lambda\in \fq^*$. Then
\[
\frac{v_i^2}{u_i}=\lambda\alpha_{i}\,\,\mbox{or}\,\,{v_i^2}=\lambda{u_i\alpha_{i}}\,\,\forall i=1,2,\cdots,n.
\]
So the dual code  $\mc_{k-1}(D, k, \eta, \vec{v})^\perp$ has a generator matrix 
\[
\left(\begin{array}{cccc}
v_{1}\left(\alpha_{1}^{k-1}+\eta'\alpha_{1}^{-1}\right) & v_{2}\left(\alpha_{2}^{k-1}+ \eta'
\alpha_{2}^{-1}\right) &\cdots & v_{n}\left(\alpha_{n}^{k-1}+\eta'\alpha_{n}^{-1}\right)\\
v_{1} & v_{2} & \cdots & v_{n} \\
v_{1} \alpha_{1} & v_{2} \alpha_{2} & \cdots & v_{n} \alpha_{n} \\
\vdots & \vdots & \ddots & \vdots \\
v_{1} \alpha_{1}^{k-2} & v_{2} \alpha_{2}^{k-2} & \cdots & v_{n} \alpha_{n}^{k-2} \\
\end{array}\right).
\]
Since the TGRS code $\mc_{k-1}(D, k, \eta, \vec{v})$  is self-dual, we have $\eta'=\eta$. That is 
\[
\eta^2=-\frac{\sum_{i=1}^nu_i \alpha_i^{n-1}}{\sum_{i=1}^nu_i \alpha_i^{-1}}.
\]

\begin{flushleft}
(The sufficient part ``$\Longleftarrow$'') It is obvious from the proof above.
\end{flushleft}

\end{proof}

Note that the Frobenius map is a permutation on finite fields of even characteristic. So we have the following corollary.
\begin{corollary}
	Let $\fq$  be a finite field of characteristic $2$. Let $D=\{\alpha_{1},\alpha_{2},\cdots, \alpha_{n}\}\subset \fq^*$ be any subset of size $n=2k$ and denote by $u_i=\frac{1}{\prod_{j\neq i}(\alpha_{i}-\alpha_{j})}$ and by $v_i=\sqrt{u_i\alpha_{i}}$ for $i=1,2,\cdots,n$. Denote by $\eta=\sqrt{\frac{\sum_{i=1}^nu_i \alpha_i^{n-1}}{\sum_{i=1}^nu_i \alpha_i^{-1}}}\in\fq^*.$ Then the TGRS code $\mc_{k-1}(D, k, \eta, \vec{v})$ over $\fq$ is self-dual.
	
\end{corollary}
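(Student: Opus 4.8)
The plan is to reduce everything to the sufficient-and-necessary criterion of Theorem~\ref{selfdual} and to observe that the peculiarities of characteristic $2$ make both of its conditions automatic for the stated choices of $\vec{v}$ and $\eta$. Since $n=2k$ holds by hypothesis, Theorem~\ref{selfdual} applies directly, and the only point requiring care is that the square roots defining $v_i$ and $\eta$ actually exist, are unique, and are nonzero.

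First I would record that in characteristic $2$ the Frobenius map $\phi\colon\fq\to\fq$, $\phi(x)=x^2$, is a field homomorphism, and since $\fq$ is finite it is a bijection. Hence every element of $\fq$ has a unique square root, so the expressions $v_i=\sqrt{u_i\alpha_i}$ and $\eta=\sqrt{\sum_{i=1}^n u_i\alpha_i^{n-1}\big/\sum_{i=1}^n u_i\alpha_i^{-1}}$ are well defined. Because the $\alpha_i$ are distinct we have $u_i=1/\prod_{j\neq i}(\alpha_i-\alpha_j)\neq 0$, and $\alpha_i\neq 0$, so $u_i\alpha_i\neq 0$ and therefore $v_i\in\fq^*$; by the remark following Theorem~\ref{dualcode} both sums $\sum_{i=1}^n u_i\alpha_i^{n-1}$ and $\sum_{i=1}^n u_i\alpha_i^{-1}$ are nonzero, so the radicand of $\eta$ lies in $\fq^*$ and $\eta\in\fq^*$ as claimed.

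Next I would verify the two conditions of Theorem~\ref{selfdual}. For the first, taking $\lambda=1$ gives $v_i^2=u_i\alpha_i=\lambda u_i\alpha_i$ for every $i$, so condition (1) holds. For the second, squaring the definition of $\eta$ gives $\eta^2=\sum_{i=1}^n u_i\alpha_i^{n-1}\big/\sum_{i=1}^n u_i\alpha_i^{-1}$; since $-1=1$ in $\fq$, this equals $-\sum_{i=1}^n u_i\alpha_i^{n-1}\big/\sum_{i=1}^n u_i\alpha_i^{-1}$, which is exactly condition (2). With both conditions of Theorem~\ref{selfdual} met, the code $\mc_{k-1}(D,k,\eta,\vec{v})$ is self-dual.

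There is no genuine obstacle here: the entire content is the observation that the Frobenius bijection supplies the square roots demanded by the defining equations $v_i^2=\lambda u_i\alpha_i$ and $\eta^2=-\sum_{i=1}^n u_i\alpha_i^{n-1}\big/\sum_{i=1}^n u_i\alpha_i^{-1}$, while the sign in the latter becomes invisible in characteristic $2$. To be thorough I would only double-check the nonvanishing claims, and these are already supplied by the remark after Theorem~\ref{dualcode}.
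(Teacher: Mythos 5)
Your proof is correct and follows exactly the paper's (implicit) argument: the paper derives this corollary from Theorem~\ref{selfdual} with the single remark that the Frobenius map is a permutation in even characteristic, which is precisely the square-root existence point you elaborate, together with the choice $\lambda=1$ and the vanishing of the sign. Your write-up simply makes the verification of the two conditions explicit.
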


\section{Minimum Distances of $\mc_\ell(D, k, \eta, \vec{v})$ and Its Dual}\label{Sec:mindist}
In this section, we investigate the minimum distances of the TGRS code $\mc_\ell(D, k, \eta, \vec{v})$ and its dual code $\mc_\ell(D, k, \eta, \vec{v})^{\perp}$. 

For any subset $S\subset \fq$ of size $k$ and for any integer $1\leq l\leq k$, we denote by $\sigma_{l}(S)=\sum_{T\subset S,\, \#T=l}\prod_{\beta\in T}\beta$ the $l$-th elementary symmetric polynomial on $S$. 

\begin{theorem}\label{mindist}
    Notations as above. We have the following:
	\begin{enumerate}
		\item The TGRS code $\mc_{\ell}(D, k, \eta, \vec{v})$ is MDS or almost-MDS.
		\item The TGRS code $\mc_{\ell}(D, k, \eta, \vec{v})$ is MDS if and only if there does not exist any subset $S\subset D$ of size $k$ such that $\eta=(-1)^{\ell+1}\frac{\sigma_k(S)}{\sigma_{k-1-\ell}(S)}.$
	\end{enumerate}
\end{theorem}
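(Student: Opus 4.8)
The plan is to reformulate the code as an evaluation code and then count roots. Since every $\alpha_i\in\fq^*$ satisfies $\alpha_i^{q-2}=\alpha_i^{-1}$, a codeword of $\mc_\ell(D,k,\eta,\vec{v})$ is exactly $(v_i g(\alpha_i))_{i=1}^n$, where $g$ ranges over the $\fq$-span of $1,x,\dots,x^{\ell-1},x^{\ell+1},\dots,x^{k-1}$ together with $x^\ell+\eta x^{-1}$. Multiplying through by $x$, I would attach to each such $g$ the polynomial $P(x)=x\,g(x)$, of degree at most $k$. Writing $g=\sum_{j\ne\ell}a_jx^j+a_\ell(x^\ell+\eta x^{-1})$ gives $P(x)=a_\ell\eta+\sum_{j=0}^{k-1}a_jx^{j+1}$, so the constant term of $P$ equals $\eta$ times its coefficient of $x^{\ell+1}$; conversely, every $P$ with $\deg P\le k$ obeying this single linear relation comes from a unique $g$. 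Thus codewords correspond bijectively to the $k$-dimensional space of polynomials $P$ with $\deg P\le k$ and $P(0)=\eta\,[x^{\ell+1}]P$, and the $i$-th coordinate $v_ig(\alpha_i)=(v_i/\alpha_i)P(\alpha_i)$ vanishes if and only if $P(\alpha_i)=0$.

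For part (1) I would argue purely by root counting. A nonzero $P$ in this space has degree at most $k$, hence at most $k$ roots, so at most $k$ of the points $\alpha_1,\dots,\alpha_n$ are roots and the corresponding codeword has weight at least $n-k$. In particular, since $k\le n-1$ every nonzero $g$ yields a nonzero codeword, so the evaluation map is injective and the code has dimension $k$. Combining $d\ge n-k$ with the Singleton bound $d\le n-k+1$ forces $d\in\{n-k,\,n-k+1\}$, i.e. $\mc_\ell(D,k,\eta,\vec{v})$ is MDS or almost-MDS.

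For part (2), the code fails to be MDS precisely when a weight-$(n-k)$ codeword exists, i.e. when some nonzero $P$ as above vanishes at exactly $k$ of the distinct points $\alpha_i$. Such a $P$ must then have degree exactly $k$ and factor as $P(x)=c\prod_{\alpha\in S}(x-\alpha)$ for a $k$-subset $S\subset D$ and some $c\in\fq^*$. I would expand this product as $\sum_{m=0}^{k}(-1)^m\sigma_m(S)\,x^{k-m}$, read off $P(0)=c(-1)^k\sigma_k(S)$ and $[x^{\ell+1}]P=c(-1)^{k-1-\ell}\sigma_{k-1-\ell}(S)$, and substitute into the constraint $P(0)=\eta\,[x^{\ell+1}]P$. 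After cancelling $c$ and the common sign, this reduces exactly to $\eta\,\sigma_{k-1-\ell}(S)=(-1)^{\ell+1}\sigma_k(S)$, equivalently $\eta=(-1)^{\ell+1}\sigma_k(S)/\sigma_{k-1-\ell}(S)$.

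The step needing care is the degenerate case $\sigma_{k-1-\ell}(S)=0$, where the displayed fraction is meaningless. Here I would use that $\sigma_k(S)=\prod_{\alpha\in S}\alpha\ne0$ because $D\subseteq\fq^*$: the constraint would then read $0=(-1)^{\ell+1}\sigma_k(S)\ne0$, which is impossible, so such an $S$ never yields a minimum-weight codeword and is correctly excluded by the stated equality. Assembling these observations, a weight-$(n-k)$ codeword exists if and only if some $k$-subset $S$ satisfies $\eta=(-1)^{\ell+1}\sigma_k(S)/\sigma_{k-1-\ell}(S)$, and negating this gives the MDS criterion. The only real bookkeeping hazard is pinning down the index $k-1-\ell$ and the sign $(-1)^{\ell+1}$ when extracting the coefficient of $x^{\ell+1}$ from the product.
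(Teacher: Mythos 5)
Your proposal is correct and follows essentially the same route as the paper: both multiply the evaluation polynomial by $x$ to obtain a polynomial of degree at most $k$, bound the minimum distance by root counting plus the Singleton bound, and characterize the non-MDS case by factoring that polynomial over a $k$-subset $S\subset D$ and comparing the constant term with the coefficient of $x^{\ell+1}$, arriving at the same sign $(-1)^{\ell+1}$. Your explicit treatment of the degenerate case $\sigma_{k-1-\ell}(S)=0$ is a small point the paper leaves implicit, but it does not change the argument.
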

\begin{proof}
	For the first statement, it is easy to see that any $k$-dimensional subspace of an $[n,k+1]$-MDS code is MDS or almost-MDS. And the TGRS code $\mc_{\ell}(D, k, \eta, \vec{v})$ is a $k$-dimensional subspace of GRS code $GRS(D,k+1,\vec{v}\odot \vec{\alpha}^{-1})$, so $\mc_{\ell}(D, k, \eta, \vec{v})$ is MDS or almost-MDS. In order to prove the second statement, we give a new proof of the first statement.
	
	Let $d$ be the minimum distance of the TGRS code $\mc_\ell(D, k, \eta, \vec{v})$.
	Since the action $\vec{r}\mapsto \vec{r} \odot \vec{v}$ is Hamming-distance-preserving, we have 
	\begin{align*}
	d=&\min_{(a_0,a_1,\cdots, a_{k-1})\in \fq^k\setminus\{\vec{0}\}} \#\{\alpha\in D\mid a_0+a_1\alpha+\cdots+a_{k-1}\alpha^{k-1}+\eta a_l\alpha^{-1}\neq 0\}\\
	=&n-\max_{(a_0,a_1,\cdots, a_{k-1})\in \fq^k\setminus\{\vec{0}\}}\#\{\mbox{zeros of $a_0+a_1x+\cdots+a_{k-1}x^{k-1}+\eta a_lx^{-1}$ in D}\}\\
	=&n-\max_{(a_0,a_1,\cdots, a_{k-1})\in \fq^k\setminus\{\vec{0}\}}\#\{\mbox{zeros of $\eta a_l+a_0x+a_1x^2+\cdots+a_{k-1}x^{k} $ in D}\}\\
	\geq & n-k \label{d=n-k}\tag{2}
	\end{align*}
	where the last inequality follows from $\deg(\eta a_l+a_0x+a_1x^2+\cdots+a_{k-1}x^{k}) \leq k.$
	
	On the other hand, by the Singleton bound, we have $d\leq n-k+1.$ So 
	\[
	d\in \{n-k, n-k+1\}.
	\]
	In other words, the TGRS code $\mc_{\ell}(D, k, \eta, \vec{v})$ is almost-MDS or MDS. 

	Note that the equality holds in the inequality~(\ref{d=n-k}) if and only if 
	\[
	\max_{(a_0,a_1,\cdots, a_{k-1})\in \fq^k\setminus\{\vec{0}\}}\#\{\mbox{zeros of $\eta a_l+a_0x+a_1x^2+\cdots+a_{k-1}x^{k} $ in D}\}=k,
	\]
	which is equivalent to that there exists a subset $S\subset D$ of size $k$ such that
	\[
	\eta a_l+a_0x+a_1x^2+\cdots+a_{k-1}x^{k}=a_{k-1}\prod_{\alpha\in S}(x-\alpha).
	\]
	The last condition is equivalent to that there exists a subset $S\subset D$ of size $k$ such that
	 $$\eta=(-1)^{\ell+1}\frac{\prod_{\alpha\in S}\alpha}{\sum_{T\subset S,\, \#T=k-\ell-1}\prod_{\beta\in T}\beta}.$$
	
\end{proof}

In general, one can not replace almost-MDS by near-MDS. For the special case $\ell=k-1$, we have the following corollary.

\begin{corollary} \label{near-MDS}
	\begin{enumerate}
		\item The TGRS code $\mc_{k-1}(D, k, \eta, \vec{v})$ is MDS or near-MDS.
		\item The TGRS code $\mc_{k-1}(D, k, \eta, \vec{v})$ is near-MDS if and only if there exists a subset $S\subset D$ of size $k$ such that $\eta=(-1)^k\prod_{\alpha\in S}\alpha.$
	\end{enumerate}
\end{corollary}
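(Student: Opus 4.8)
The plan is to derive both parts of Corollary~\ref{near-MDS} from what has already been proved, the only genuinely new ingredient being the upgrade from ``almost-MDS'' to ``near-MDS.'' Throughout I write $C = \mc_{k-1}(D, k, \eta, \vec{v})$.

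For the first statement, I would begin by recalling that Theorem~\ref{mindist}(1) already guarantees that $C$ is MDS or almost-MDS. To promote ``almost-MDS'' to ``near-MDS'' I would bring in the dual. By Corollary~\ref{dual:k-1}, the dual $C^\perp$ is itself a member of the same family, namely the TGRS code $\mc_{n-k-1}(D, n-k, \eta', \vec{u}\odot\vec{v}^{-1}\odot\vec{\alpha})$ (with twist index $n-k-1$ and dimension $n-k$, so again of the shape $\ell = \dim - 1$). Consequently Theorem~\ref{mindist}(1) applies verbatim to $C^\perp$ and shows that $C^\perp$ is also MDS or almost-MDS. I would then invoke the standard duality principle that a code is MDS if and only if its dual is MDS: if $C$ is MDS we are in the first alternative of the corollary, whereas if $C$ is almost-MDS (hence not MDS) then $C^\perp$ cannot be MDS either, for otherwise $C = (C^\perp)^\perp$ would be MDS; since $C^\perp$ is MDS or almost-MDS, it is therefore almost-MDS. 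Thus $C$ and $C^\perp$ are simultaneously almost-MDS, which is exactly the definition of near-MDS. This settles statement (1).

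For the second statement I would simply specialize Theorem~\ref{mindist}(2) to $\ell = k-1$. In this case $(-1)^{\ell+1} = (-1)^k$, while $\sigma_{k-1-\ell}(S) = \sigma_0(S) = 1$ and $\sigma_k(S) = \prod_{\alpha \in S}\alpha$, so the MDS criterion reads: $C$ is MDS if and only if there is no size-$k$ subset $S\subset D$ with $\eta = (-1)^k\prod_{\alpha\in S}\alpha$. Since statement (1) shows that $C$ is either MDS or near-MDS, and these are mutually exclusive (the former has minimum distance $n-k+1$, the latter $n-k$), being near-MDS is equivalent to not being MDS. Negating the MDS criterion then yields precisely the stated equivalence, that $C$ is near-MDS if and only if some size-$k$ subset $S\subset D$ satisfies $\eta = (-1)^k\prod_{\alpha\in S}\alpha$.

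The main obstacle — indeed essentially the only nontrivial point — is the passage from almost-MDS to near-MDS in statement (1). Everything hinges on the self-similarity of the construction under duality: Corollary~\ref{dual:k-1} tells us that $C^\perp$ lies in the same family, so that Theorem~\ref{mindist} may be re-applied to it, after which the duality of the MDS property closes the argument with no further computation. Statement (2) then requires only the bookkeeping of the elementary symmetric specialization at $\ell = k-1$.
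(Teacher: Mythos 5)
Your proposal is correct and follows essentially the same route as the paper: it invokes Theorem~\ref{mindist} for the code, uses Corollary~\ref{dual:k-1} to see that the dual is again a TGRS code so that Theorem~\ref{mindist} applies to it as well, and closes the argument with the duality of the MDS property before specializing Theorem~\ref{mindist}(2) at $\ell=k-1$. Your write-up is in fact slightly more explicit than the paper's about why the two alternatives must occur simultaneously and about the elementary-symmetric-function bookkeeping.
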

\begin{proof}

By Theorem~\ref{mindist}, the TGRS code $\mc_{k-1}(D, k, \eta, \vec{v})$ is almost-MDS or MDS. Note that the dual code $\mc_{k-1}(D, k, \eta, \vec{v})^{\perp}$ which is also a TGRS code by Corollary~\ref{dual:k-1}. So by the same argument above,  the dual code $\mc_{k-1}(D, k, \eta, \vec{v})^{\perp}$ is almost-MDS or MDS. It is well-known that the dual code of any MDS code is still MDS. So  the TGRS code $\mc_{k-1}(D, k, \eta, \vec{v})$ and its dual code are almost-MDS or MDS simultaneously. That is, $\mc_{k-1}(D, k, \eta, \vec{v})$ is near-MDS or MDS. 

The second statement follows from Theorem~\ref{mindist} and the first statement of this corollary.

\end{proof}
	
\begin{remark}
	In~\cite{HYN21}, the authors showed that the code constructed there is near-MDS if and only if certain subset sum problem has a solution. Here, in our construction, the code is near-MDS if and only if the following subset product problem 
	\[
	 \mbox{find a subset $S\subset D$ of size $k$ such that $\eta=(-1)^k\prod_{\alpha\in S}\alpha$}
	\]
	has a solution.
\end{remark}	

\begin{corollary}
	The following statements are equivalent.
	\begin{enumerate}
		\item The TGRS code $\mc_{k-1}(D, k, \eta, \vec{v})$ is near-MDS.
		\item There exists a subset $S\subset D$ of size $k$ such that $\eta=(-1)^k\prod_{\alpha\in S}\alpha.$
		\item  There exists a subset $T\subset D$ of size $n-k$ such that $-\frac{\sum_{i=1}^nu_i \alpha_i^{n-1}}{\eta\sum_{i=1}^nu_i \alpha_i^{-1}}=(-1)^{n-k}\prod_{\alpha\in T}\alpha.$
	\end{enumerate}
\end{corollary}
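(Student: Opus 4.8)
The plan is to establish $(1)\Leftrightarrow(2)$ and $(1)\Leftrightarrow(3)$ separately, from which $(2)\Leftrightarrow(3)$ follows automatically. The equivalence $(1)\Leftrightarrow(2)$ requires no new work at all: it is precisely the second statement of Corollary~\ref{near-MDS}. So the whole task collapses to proving $(1)\Leftrightarrow(3)$, and the idea is to pass to the dual code and reuse the same corollary there.

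First I would record the elementary but decisive observation that being near-MDS is invariant under duality. A code $C$ is near-MDS exactly when both $C$ and $C^\perp$ are almost-MDS; since $(C^\perp)^\perp=C$, this condition is symmetric, so $C$ is near-MDS if and only if $C^\perp$ is near-MDS. Applying this with $C=\mc_{k-1}(D, k, \eta, \vec{v})$, statement $(1)$ becomes equivalent to the assertion that $\mc_{k-1}(D, k, \eta, \vec{v})^\perp$ is near-MDS.

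Next I would invoke Corollary~\ref{dual:k-1}, which identifies this dual as the TGRS code $\mc_{n-k-1}(D, n-k, \eta', \vec{u}\odot\vec{v}^{-1}\odot\vec{\alpha})$ with $\eta'=-\frac{\sum_{i=1}^nu_i \alpha_i^{n-1}}{\eta\sum_{i=1}^nu_i \alpha_i^{-1}}$. The point to check is that this dual is again of the special shape handled by Corollary~\ref{near-MDS}: its twist index $n-k-1$ equals its dimension $n-k$ minus one, its scaling vector $\vec{u}\odot\vec{v}^{-1}\odot\vec{\alpha}$ lies in $(\fq^*)^n$, and its twist parameter $\eta'$ is nonzero. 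Granting this, I would apply the second statement of Corollary~\ref{near-MDS} to the dual, now with dimension $n-k$ and twist $\eta'$: the dual is near-MDS if and only if there is a subset $T\subset D$ of size $n-k$ with $\eta'=(-1)^{n-k}\prod_{\alpha\in T}\alpha$. Substituting the displayed value of $\eta'$ reproduces statement $(3)$ verbatim, closing the chain.

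The main obstacle here is not mathematical depth but careful bookkeeping: I must correctly match the parameters of the dual against the generic notation $\mc_{k-1}(D, k, \eta, \vec{v})$, letting the roles of $k$, $\eta$, and $\vec{v}$ be played by $n-k$, $\eta'$, and $\vec{u}\odot\vec{v}^{-1}\odot\vec{\alpha}$, and confirming that the admissible range $2\le n-k\le n-1$ of the dimension is respected so that Corollary~\ref{near-MDS} genuinely applies. Once these identifications are in place, no further computation is needed.
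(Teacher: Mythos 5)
Your proposal is correct and is exactly the argument the paper intends: the corollary is stated without proof precisely because it follows by combining Corollary~\ref{near-MDS} (giving $(1)\Leftrightarrow(2)$) with the duality invariance of the near-MDS property and Corollary~\ref{dual:k-1} applied to the dual code (giving $(1)\Leftrightarrow(3)$). Your attention to the parameter bookkeeping ($\eta'\neq 0$, the dimension range of the dual) is appropriate and does not reveal any obstruction.
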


%\section{Examples}
Next, we give an example to illustrate the above theorems.
\begin{example}
	Suppose the finite field $\fq$ has odd characteristic. Take the evaluation set $D=\fq^*\subset \mathbb{F}_{q^2}.$ In this case,  
	\[
	\sum_{i=1}^nu_i \alpha_i^{n-1}=\sum_{i=1}^nu_i \alpha_i^{q-2}=\sum_{i=1}^nu_i \alpha_i^{-1}.
	\]
    Let $\eta \in \mathbb{F}_{q^2}$ be such that $\eta^2=-1$. Moreover, elements $u_i\alpha_{i}\,(\forall \alpha_i\in D)$ are squares in $\mathbb{F}_{q^2}$, i.e., there exist $v_i\in \mathbb{F}_{q^2}$ such that $v_i^2=u_i\alpha_{i},\,i=1,2,\cdots,q-1$. By Theorem~\ref{selfdual}, the TGRS code $\mc_{\frac{q-3}{2}}(\fq^*, \frac{q-1}{2}, \eta, \vec{v})$ is a self-dual code over $\mathbb{F}_{q^2}$.
    
     If $q\equiv 3\pmod 4$, then the subset product problem 
    	\[
    \mbox{find a subset $S\subset \fq^*$ of size $k=\frac{q-1}{2}$ such that $\eta=(-1)^k\prod_{\alpha\in S}\alpha$}
    \]
    has no solution as $\eta \in \mathbb{F}_{q^2}\setminus\fq$. By Corollary~\ref{near-MDS}, $\mc_{\frac{q-3}{2}}(\fq^*, \frac{q-1}{2}, \eta, \vec{v})$ over $\mathbb{F}_{q^2}$ is MDS. %which is not equivalent to any GRS code.
    
     If $q\equiv 1\pmod 4$, then the subset product problem 
    \[
    \mbox{find a subset $S\subset \fq^*$ of size $k=\frac{q-1}{2}$ such that $\eta=(-1)^k\prod_{\alpha\in S}\alpha$}
    \]
    has a solution. Indeed, let $\zeta\in \fq^*$ be a primitive element of $\fq$ and $\eta=\zeta^{\frac{q-1}{4}}$ a square root of $-1$. Then the above subset product problem is equivalent to finding a subset $S\subset \{0,1,\cdots,q-2\}$  of size $\frac{q-1}{2}$ satisfying $\sum_{s\in S}s\equiv \frac{3(q-1)}{4} \pmod {q-1}.$ It is sufficient by taking 
    \[
    S=\begin{cases}
    \{1,2,\cdots, \frac{q-1}{2}\},& \mbox{if $q\equiv 5\pmod 8$;}\\
    \{0,1,\cdots, \frac{q-3}{2}\},& \mbox{if $q\equiv 1\pmod 8$.}
    \end{cases}
    \]
     By Corollary~\ref{near-MDS}, $\mc_{\frac{q-3}{2}}(\fq^*, \frac{q-1}{2}, \eta, \vec{v})$ over $\mathbb{F}_{q^2}$ is near-MDS.
\end{example}

\section{Conclusion}\label{Sec:conc}
In this paper, a class of TGRS codes were studied. The properties such as self-dual, near-MDS, MDS of the codes were considered. It is interesting that some special TGRS codes are near-MDS or MDS if and only if certain subset product problem is solvable or not. The subset sum problem on finite fields has attracted a lot of attention in the last decades, including theoretical aspect and applications in cryptography and coding theory, etc. Although the subset product problem on finite fields $\fq$ is equivalent to the subset sum problem on the residue class ring $\mathbb{Z}/(q-1)\mathbb{Z}$ by taking discrete logarithm, their computational hardness might be slightly different as discrete logarithm problem over finite fields is generally hard. It might be interesting to study the subset product problem on finite fields directly.

\end{document}